\def\idtt#1{\ensuremath{\mathtt{#1}}}
\def\rankop{\idtt{rank}}
\def\accessop{\idtt{access}}
\title{Succinct Posets}
\author{J. Ian Munro and Patrick K. Nicholson\thanks{This research was
    funded in part by NSERC of Canada, and the Canada Research Chairs
    program.}}
\institute{David R. Cherition School of
  Computer Science, University of Waterloo}
\begin{document}
\pagestyle{plain}

\maketitle

\begin{abstract}
We describe an algorithm for compressing a partially ordered set, or
\emph{poset}, so that it occupies space matching the information
theory lower bound (to within lower order terms), in the worst case.
Using this algorithm, we design a succinct data structure for
representing a poset that, given two elements, can report whether one
precedes the other in constant time.  This is equivalent to succinctly
representing the transitive closure graph of the poset, and we note
that the same method can also be used to succinctly represent the
transitive reduction graph.  For an $n$ element poset, the data
structure occupies $n^2/4 + o(n^2)$ bits, in the worst case, which is
roughly half the space occupied by an upper triangular matrix.
Furthermore, a slight extension to this data structure yields a
succinct oracle for reachability in arbitrary directed graphs.  Thus,
using roughly a quarter of the space required to represent an
arbitrary directed graph, reachability queries can be supported in
constant time.
\end{abstract}

\section{Introduction}

Partially ordered sets, or \emph{posets}, are useful for modelling
relationships between objects, and appear in many different areas,
such as natural language processing, machine learning, and database
systems.  As problem instances in these areas are ever-increasing in
size, developing more space efficient data structures for representing
posets is becoming an increasingly important problem.

When designing a data structure to represent a particular type of
combinatorial object, it is useful to first determine how many objects
there are of that type.  By a constructive enumeration argument,
Kleitman and Rothschild~\cite{KR75} showed that the number of $n$
element posets is $2^{n^2/4 + O(n)}$.  Thus, the information theoretic
lower bound indicates that representing an arbitrary poset requires
$\lg(2^{n^2/4+O(n)}) = n^2/4 + O(n)$ bits\footnote{We use $\lg n$ to
  denote $\lceil \log_2 n \rceil$.}.  This naturally raises the
question of how a poset can be represented using only $n^2/4 + o(n^2)$
bits, \emph{and} support efficient query operations.  Such a
representation, that occupies space matching the information theoretic
lower bound to within lower order terms while supporting efficient
query operations, is called a \emph{succinct data
  structure}~\cite{J89}.

The purpose of this paper is to answer this question by describing the
first succinct representation of arbitrary posets.  We give a detailed
description of our results in Section~\ref{sec:results}, but first
provide some definitions in Section~\ref{sec:def} and then highlight
some of the previous work related to this problem in
Section~\ref{sec:prev}.

\section{Definitions\label{sec:def}}

A poset $P$, is a reflexive, antisymmetric, transitive binary relation
$\preceq$ on a set of $n$ elements $S$, denoted $P=(S,\preceq)$.  Let
$a$ and $b$ be two elements in $S$.  If $a \preceq b$, we say $a$
\emph{precedes} $b$.  We refer to queries of the form, ``Does $a$
precede $b$?'' as \emph{precedence queries}.  If neither $a \preceq b$
or $b \preceq a$, then we say $a$ and $b$ are \emph{incomparable}. For
convenience we write $a \prec b$ if $a \preceq b$ and $a \neq b$.

Each poset $P = (S,\preceq)$ is uniquely described by a directed
acyclic graph, or \emph{DAG}, $G_c = (S,E_c)$, where $E_c = \{(a,b) :
a \prec b\}$ is the set of edges.  The DAG $G_c$ is the
\emph{transitive closure graph} of $P$.  Note that a precedence query
for elements $a$ and $b$ is equivalent to the query, ``Is the edge
$(a,b)$ in $E_c$?''  Alternatively, let $G_r = (S,E_r)$ be the DAG
such that $E_r = \{(a,b) : a \prec b, \nexists_{c \in S}, a \prec c
\prec b \}$, i.e., the minimal set of edges that imply all the edges
in $E_c$ by transitivity.  The DAG $G_r$ also uniquely describes $P$,
and is called the \emph{transitive reduction graph} of $P$.  

Posets are also sometimes illustrated using a \emph{Hasse diagram},
which displays all the edges in the transitive reduction, and
indicates the direction of an edge $(a,b)$ by drawing element $a$
above $b$.  We refer to elements that have no outward edges in the
transitive reduction as \emph{sinks}, and elements that have no inward
edges in the transitive reduction as \emph{sources}.  See
Figure~\ref{fig:red-clos} for an example.  Since all these concepts
are equivalent, we may freely move between them when discussing a
poset, depending on which representation is the most convenient.

\begin{figure}
\centering
\includegraphics{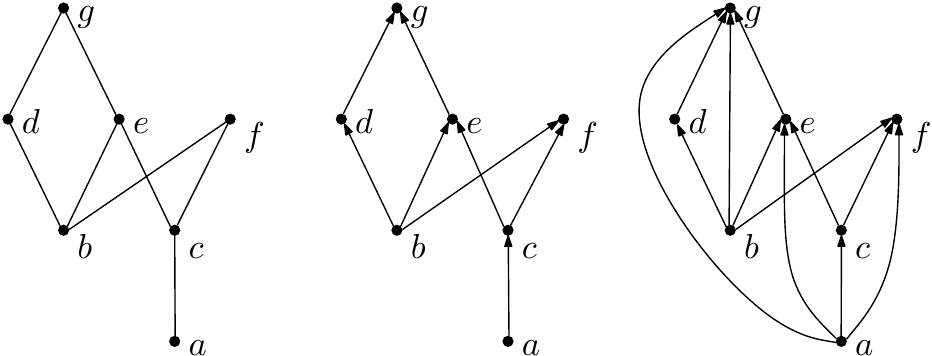}
\caption{\label{fig:red-clos}A Hasse diagram of a poset (left), the
  transitive reduction (centre), and the transitive closure
  (right). Elements $a$ and $b$ are sources, and elements $g$ and $f$
  are sinks.}
\end{figure}

A \emph{linear extension} $L = \{a_1, ... ,a_n\}$ is a total ordering
of the elements in $S$ such if $a_i \prec a_j$ for some $i \neq j$,
then $i < j$.  However, note that the converse is not necessarily
true: we cannot determine whether $a_i \prec a_j$ unless we know that
$a_i$ and $a_j$ are comparable elements.  A \emph{chain} of a poset,
$P = (S,\preceq)$, is a total ordering $C = \{a_1, ... ,a_k\}$ on a
subset of $k$ elements from $S$ such that $a_i \prec a_j$ iff $i < j$,
for $1 \le i < j \le k$.  An \emph{antichain} is a set $A = \{a_1 ,
... , a_k \}$ that is a subset of $k$ elements from $S$, such that
each $a_i$ and $a_j$ are incomparable, for $1 \le i < j \le k$.  The
\emph{height} of a poset is the size of its maximum length chain, and
the \emph{width} of a poset is the size of its maximum antichain.

For a graph $G = (V,E)$, we sometimes use $E(H)$ to denote the set of
edges $\{(a,b) : (a,b) \in E, a \in H, b \in H \}$, where $H \subseteq
V$.  Similarly, we use $G(H)$ to denote the subgraph of $G$ induced by
$H$, i.e., the subgraph with vertex set $H$ and edge set $E(H)$.
Finally, if $(a,b) \in E$, or $(b,a) \in E$, we say that $b$ is a
\emph{neighbour} of $a$ in $G$.

\section{Previous work\label{sec:prev}}

Previous work in the area of succinct data structures includes
representations of arbitrary undirected graphs~\cite{AM08}, planar
graphs~\cite{AHM12}, and trees~\cite{MR01}. There has also been
interest in developing reachability oracles for planar directed
graphs~\cite{T04}, as well as approximate distance oracles for
undirected graphs~\cite{TZ05}.  For restricted classes of posets, such
as lattices~\cite{TV99} and distributive lattices~\cite{HN96}, space
efficient representations have been developed, though they are not
succinct.

One way of storing a poset is by representing either its transitive
closure graph, or transitive reduction graph, using an adjacency
matrix.  If we topologically order the vertices of this graph, then we
can use an upper triangular matrix to represent the edges, since the
graph is a DAG.  Such a representation occupies $\binom{n}{2}$ bits,
and can, in a single bit probe, be used to report whether an edge
exists in the graph between two specified elements.  Thus, using this
simple approach we can achieve a space bound that is roughly two times
the information theory lower bound for representing a poset.  An
alternative representation, called the \emph{ChainMerge} structure was
proposed by Daskalakis et al.~\cite{DKMRV09}, that occupies $O(nw)$
\emph{words} of space, where $w$ is the width of the poset.  The
ChainMerge structure, like the transitive closure graph, supports
precedence queries in $O(1)$ time.

Recently, Farzan and Fischer~\cite{FF11} presented a data structure
that represents a poset using $2nw(1+o(1)) + (1+\varepsilon)n \lg n$
bits, where $w$ is the width of the poset, and $\varepsilon > 0$ is an
arbitrary positive constant.  This data structure supports precedence
queries in $O(1)$ time, and many other operations in time proportional
to the width of the poset.  These operations are best expressed in
terms of the transitive closure and reduction graphs, and include:
reporting all neighbours of an element in the transitive closure in
$O(w + k)$ time, where $k$ is the number of reported elements;
reporting all neighbours of an element in the transitive reduction in
$O(w^2)$ time; reporting an arbitrary neighbour of an element in the
transitive reduction in $O(w)$ time; reporting whether an edge exists
between two elements in the transitive reduction in $O(w)$ time;
reporting all elements that, for two elements $a$ and $b$, are both
preceded by $a$ and precede $b$ in $O(w + k)$ time; among others.  The
basic idea of their data structure is to encode the ChainMerge
structure of Daskalakis et al.~\cite{DKMRV09} using bit sequences, and
answer queries using rank and select operations on these bit
sequences.

Since the data structure of Farzan and Fischer~\cite{FF11} is adaptive
on width, it is appropriate for posets where the width is a
slow-growing function of $n$.  However, if we select a poset of $n$
elements uniformly at random from the set of all possible $n$ element
posets, then it will have width $n/2 + o(n)$ with high
probability~\cite{KR75}.  Thus, this representation may occupy as many
as $n^2 + o(n^2)$ bits, which is roughly \emph{four times} the
information theory lower bound.  Furthermore, with the exception of
precedence queries, all other operations take linear time for such a
poset.

\section{Our Results\label{sec:results}}
Our results hold in the word-RAM model of computation with word size
$\Theta(\lg n)$ bits.  Our main result is summarized in the following
theorem:

\begin{theorem}\label{thm:main}
Let $P = (S, \preceq)$ be a poset, where $|S| = n$.  There is a
succinct data structure for representing $P$ that occupies $n^2/4 +
O(( n^2 \lg \lg n) / \lg n)$ bits, and can support precedence queries
in $O(1)$ time: i.e., given two elements $a, b \in S$, report whether
$a \preceq b$.
\end{theorem}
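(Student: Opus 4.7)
The plan is to exploit the Kleitman--Rothschild structure theorem~\cite{KR75}: almost every $n$-element poset decomposes into three antichains $L_1,L_2,L_3$ of sizes $\tfrac{n}{4}+o(n)$, $\tfrac{n}{2}+o(n)$, $\tfrac{n}{4}+o(n)$, where every comparable pair $x \prec y$ satisfies $x\in L_i$ and $y \in L_j$ for some $i<j$. Such a three-layered poset is uniquely described by the two bipartite adjacency matrices $M_{12}$ (on $L_1\times L_2$) and $M_{23}$ (on $L_2\times L_3$), whose sizes sum to $(|L_1|+|L_3|)\,|L_2|=n^2/4$ bits --- matching the leading term of the bound exactly.

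For the typical three-layered case I would store (i) a level assignment in $O(n)$ bits; (ii) the matrices $M_{12}$ and $M_{23}$ in $n^2/4$ bits; and (iii) auxiliary lookup structures described below. A precedence query $u\preceq v$ with both endpoints in the same layer returns false; with endpoints in adjacent layers it reduces to a single bit probe into $M_{12}$ or $M_{23}$. The only nontrivial case is $u\in L_1$ and $v\in L_3$, which asks for the boolean inner product of the row $M_{12}[u,\cdot]$ and the column $M_{23}[\cdot,v]$, each a bit vector of length $|L_2|=\Theta(n)$.

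The hard part will be answering this inner product in $O(1)$ time with only $O((n^2\lg\lg n)/\lg n)$ bits of additional storage --- note that the full $L_1\times L_3$ closure alone would cost $n^2/16$ bits, an order of magnitude above the slack. My plan is a Four-Russians-style block decomposition of $L_2$: partition $L_2$ into blocks of $b=\Theta(\lg n/\lg\lg n)$ consecutive elements, and maintain a universal lookup table of $2^{2b}=n^{o(1)}$ entries that returns the boolean inner product of any two $b$-bit patterns in constant time. For each pair $(u,v)\in L_1\times L_3$ I would further precompute a compact ``skip-word'' that, combined with one universal-table probe, identifies a witness block (or certifies none exists) without a linear scan over the $\Theta(n/b)$ blocks. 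Engineering these skip-words so that their total storage is $O((n^2\lg\lg n)/\lg n)$ bits while enabling a constant-time query is the most delicate step of the argument.

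Finally I would handle the minority of posets that do not fit the three-layer template. Kleitman--Rothschild's counting bounds the number of such atypical posets well below the total $2^{n^2/4+O(n)}$, so a canonical enumeration-based encoding, together with a one-bit discriminator selecting the regime and an adapted version of the lookup structures above, fits within the $n^2/4+o(n^2)$ budget. Combining the typical and atypical encodings then yields the claimed worst-case bit bound with constant-time precedence queries.
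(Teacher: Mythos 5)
Your proposal has two genuine gaps, and the first one is exactly the central difficulty of the problem. The $L_1$--$L_3$ query in your three-layer scheme asks, in effect, for a constant-time reachability/set-disjointness test between a row of $M_{12}$ and a column of $M_{23}$, each of length $\Theta(n)$, using only $o(n^2)$ bits beyond the matrices themselves. Your Four-Russians blocking handles a single block pair in $O(1)$ time, but there are $\Theta(n/b)$ blocks, and the ``skip-word'' that identifies a witness block for a pair $(u,v)$ is precisely the information you cannot afford: any structure that certifies existence or non-existence of a witness per pair is, information-theoretically, close to the $n^2/16$-bit answer matrix you correctly ruled out. You flag this as the delicate step but do not resolve it, and no standard technique resolves it. The paper sidesteps composition entirely: it never answers a query by combining two stored relations. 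Instead, it first flattens the poset to height $O(\lg n)$ (paying $O(n^2/\lg n)$ bits to record the merged edges explicitly), then repeatedly extracts balanced bicliques of size $\Theta(\lg n/\lg\lg n)$ from dense consecutive antichain pairs via K\H{o}v\'ari--S\'os--Tur\'an. The Key Lemma (Lemma~\ref{lem:clos-connect}) shows that, by transitivity, each remaining element can connect to a $\tau$-vertex biclique in only $2^{\tau/2+1}-1$ ways, so the \emph{entire} connectivity of the removed elements to the rest of the poset is stored explicitly in about $n(\tau/2+1)$ bits per biclique --- which amortizes to $n^2/4$ overall and makes every query a direct table lookup. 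A second issue with your typical case: for a general three-layer poset the relation on $L_1\times L_3$ is not determined by $M_{12}$ and $M_{23}$ (a pair $u\prec v$ need not have a witness in $L_2$), so your two matrices do not even encode the poset without further restriction of the ``typical'' class.

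The second gap is the atypical case. The theorem is a worst-case statement over \emph{all} posets, and the posets outside the Kleitman--Rothschild three-layer template form a huge and structurally diverse family (all chains, all posets of larger height, etc.). A canonical enumeration index achieves the space bound, but it does not support $O(1)$-time precedence queries: extracting one bit of the relation from an enumeration rank requires re-running (part of) the decoding. This is precisely the dead end the authors report hitting with their reverse-engineered Kleitman--Rothschild encoding, and it is why they developed the flattening-plus-biclique approach instead. As written, your fallback regime would need an entirely new query mechanism, which would amount to solving the original problem again for a still-general class of posets.
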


The previous theorem implies that we can, in $O(1)$ time, answer
queries of the form, ``Is the edge $(a,b)$ in the transitive closure
graph of $P$?''  In fact, we can also apply the same representation to
support, in $O(1)$ time, queries of the form, ``Is the edge $(a,b)$ in
the transitive reduction graph of $P$?'' However, at present it seems
as though we can only support efficient queries in one or the other,
\emph{not both} simultaneously.  For this reason we focus on the
closure, since it is likely more useful, but state the following
theorem:

\begin{theorem}\label{thm:tr}
Let $G_r = (S,E_r)$ be the transitive reduction graph of a poset,
where $|S| = n$.  There is a succinct data structure for representing
$G_r$ that occupies $n^2/4 + O(( n^2 \lg \lg n) / \lg n)$ bits, and,
given two elements $a,b \in S$, can report whether $(a,b) \in E_r$ in
$O(1)$ time.
\end{theorem}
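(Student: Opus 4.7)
The plan is to re-instantiate the succinct-encoding machinery that proves Theorem~\ref{thm:main}, but with the indicator of $(a,b) \in E_r$ replacing the indicator of $a \prec b$ throughout. Two observations justify this. First, a poset is uniquely recoverable from its transitive reduction, so the Kleitman--Rothschild count of $2^{n^2/4 + O(n)}$ posets is simultaneously a count of $n$-element labelled transitive reduction graphs, which gives the same $n^2/4$ information-theoretic lower bound. Second, $E_r \subseteq E_c$, so any block-structured decomposition of $S$ that localises $E_c$ for the purposes of Theorem~\ref{thm:main} equally localises $E_r$.

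Concretely, I would reuse the ground-set decomposition from Theorem~\ref{thm:main}: a partition of $S$, compatible with a linear extension, into a bulk bipartite part whose adjacency matrix contributes the $n^2/4$ leading term, and a fringe of lower-order size that is stored naively in $o(n^2)$ bits. The same block size $t = \Theta(\lg n / \lg \lg n)$ is used, so each block's adjacency pattern is encoded against a universal lookup table of size $n^{O(1)}$ bits, with an auxiliary rank/select directory mapping block indices to their compressed offsets. For Theorem~\ref{thm:tr} I simply populate each bulk block with reduction bits rather than closure bits, and populate the fringe structure with the reduction edges it contains. An edge query $(a,b) \in E_r?$ then reduces, after $O(1)$ coordinate translation, to reading a single bit from a bulk block or performing a direct lookup in the fringe table.

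The main obstacle is to verify that the same $n^2/4 + O((n^2 \lg\lg n)/\lg n)$ space bound persists, since the bit-level encoder in Theorem~\ref{thm:main} is calibrated against a counting bound on the realisable bulk patterns, and reduction patterns are not \emph{a priori} compressible in the same way as closure patterns. The key is that the encoder does not exploit transitivity at the bit level; it only uses the fact that the number of admissible bulk patterns, extended by any fringe, is bounded by $2^{n^2/4 + o(n^2)}$. Because distinct posets have distinct reductions, the same Kleitman--Rothschild bound applies to admissible reduction patterns, so the per-block entropy estimate transfers. Once this counting step is checked, the $O(1)$ query time and the $O((n^2 \lg \lg n)/\lg n)$ overhead carry over unchanged from the proof of Theorem~\ref{thm:main}.
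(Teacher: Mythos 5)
There is a genuine gap, and it sits at the exact point you wave past: your claim that ``the encoder does not exploit transitivity at the bit level; it only uses the fact that the number of admissible bulk patterns \ldots is bounded by $2^{n^2/4+o(n^2)}$'' misdescribes the proof of Theorem~\ref{thm:main}. The data structure behind Theorem~\ref{thm:main} is not a block-by-block entropy coder calibrated against the Kleitman--Rothschild count (the authors explicitly abandoned enumeration-based compression because it could not be queried); it is a structural encoder. The entire saving from $n^2/2$ down to $n^2/4$ bits comes from the Key Lemma (Lemma~\ref{lem:clos-connect}): after extracting a balanced biclique $D$ of size $\tau$ between two consecutive antichains, transitivity forces each remaining vertex to connect to $D$ in one of only $2^{\tau/2+1}-1$ ways, so its adjacency to $D$ costs about $\tau/2$ bits instead of $\tau$. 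Your bijection argument (distinct posets have distinct reductions, hence the same lower bound) is correct but only gives existence of \emph{some} $n^2/4$-bit code; it does not show that \emph{this} encoder, run on reduction bits, still achieves that bound, because the Key Lemma as stated is a statement about closure edges and simply does not apply verbatim to $E_r$. Indeed the whole point of plugging reduction bits into the dense case is that you must re-prove the $2^{\tau/2+1}-1$ bound for reduction adjacencies, and you never do.

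The missing step is the reduction-graph analogue of the Key Lemma, which is what the paper itself points to (concluding remarks): if $D$ is a balanced biclique between antichains $U_i$ and $U_j$ with every element of $D\cap U_i$ preceding every element of $D\cap U_j$, then a vertex $v$ lying above both cannot have a reduction edge from any $a\in D\cap U_i$ while also having one from some $b\in D\cap U_j$, since $a\prec b\prec v$ would make $(a,v)$ a non-reduction edge; symmetrically for $v$ below. So $v$'s reduction neighbourhood in $D$ is again confined to one side of the biclique, giving the same $2^{\tau/2+1}-1$ count (and, as the paper notes, here the two antichains need not even be consecutive). Once you state and prove that lemma, the rest of the machinery --- flattening, biclique extraction via Lemma~\ref{lem:kst}, the sparse case, and the query index --- does carry over, and the flattening step is in fact easier for the reduction. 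Separately, the ``universal lookup table of size $n^{O(1)}$ indexed by block adjacency patterns'' you describe is not the paper's construction and would reintroduce the decoding problem: there is no known efficient way to index the realisable reduction (or closure) patterns of a block against the class of all posets, which is precisely why the authors resort to the extremal-graph-theoretic route.
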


\paragraph{Reachability in Directed Graphs}: 
For an arbitrary DAG, the \emph{reachability relation} between
vertices is a poset: i.e., given two vertices, $a$ and $b$, the
relation of whether there a directed path from $a$ to $b$ in the DAG.
As a consequence, Theorem~\ref{thm:main} implies that there is a data
structure that occupies $n^2/4 + o(n^2)$ bits, and can support
reachability queries in a DAG, in $O(1)$ time. We can even strengthen
this observation by noting that for an \emph{arbitrary directed graph}
$G$, the \emph{condensation} of $G$--- the graph that results by
contracting each strongly connected component into a single
vertex~\cite[Section 22.5]{CLRS01}--- is a DAG.  Given two vertices
$a$ and $b$, if $a$ and $b$ are in the same strongly connected
component, then $b$ is reachable from $a$.  Otherwise, we can apply
Theorem~\ref{thm:main} to the condensation of $G$.  Thus, we get the
following corollary:

\begin{corollary}\label{cor:digraph}
Let $G$ be a directed graph.  There is a data structure that occupies
$n^2/4 + o(n^2)$ bits and, given two vertices of $G$, $a$ and $b$, can
report whether $b$ is reachable from $a$ in $O(1)$ time.
\end{corollary}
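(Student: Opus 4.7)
The plan is to reduce reachability in an arbitrary directed graph to a precedence query on the poset defined by its condensation. First I would compute the condensation $G^*$ of $G$ (obtained by contracting each strongly connected component to a single vertex), let $n^* \le n$ be its vertex count, and label the SCCs by integers in $[1, n^*]$. Since $G^*$ is a DAG, its reachability relation is a poset $P^*$ on these labels, so Theorem~\ref{thm:main} produces a data structure for $P^*$ supporting precedence in $O(1)$ time and using $(n^*)^2/4 + O(((n^*)^2 \lg \lg n^*)/\lg n^*)$ bits.

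Alongside this poset structure I would store an array $\pi$ of length $n$ mapping each vertex of $G$ to the label of its SCC, using $\lceil \lg n^* \rceil$ bits per entry. A query asking whether $b$ is reachable from $a$ is answered in $O(1)$ time by reading $\pi(a)$ and $\pi(b)$: if the two labels coincide, $a$ and $b$ share an SCC and the answer is yes; otherwise we return the answer to the precedence query $\pi(a) \preceq \pi(b)$ in $P^*$. Correctness is immediate from the defining property of the condensation, namely that a directed path from $a$ to $b$ exists in $G$ iff $\pi(a) = \pi(b)$ or there is a directed path from $\pi(a)$ to $\pi(b)$ in $G^*$.

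For the space bound, $\pi$ occupies $O(n \lg n) = o(n^2)$ bits, and the poset structure occupies at most $n^2/4 + O((n^2 \lg \lg n)/\lg n)$ bits since the bound of Theorem~\ref{thm:main} is monotone in its argument for sufficiently large $n^*$. Summing yields the claimed $n^2/4 + o(n^2)$ bits. Nothing in this plan seems to be a real obstacle; the only step warranting a line of justification is the monotonicity observation that lets us upper bound the poset structure's space by plugging in $n$ in place of $n^*$.
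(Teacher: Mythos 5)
Your proposal is correct and matches the paper's argument exactly: the paper also reduces reachability in $G$ to a precedence query on the poset given by the reachability relation of the condensation, answering ``yes'' immediately when the two vertices lie in the same strongly connected component and otherwise invoking Theorem~\ref{thm:main}. The extra details you supply (the explicit SCC-label array $\pi$ and the monotonicity of the space bound in $n^*$) are routine and consistent with what the paper leaves implicit.
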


\noindent
Note that the space bound of the previous corollary is roughly a
quarter of the space required to represent an arbitrary directed
graph!  Switching back to the terminology of order theory, the
previous corollary generalizes Theorem~\ref{thm:main} to the larger
class of binary relations known as \emph{quasi-orders}: i.e., binary
relations that are reflexive and transitive, but not necessarily
antisymmetric.  In fact, reflexivity does not restrict the binary
relation very much, so we can further generalize
Theorem~\ref{thm:main} to arbitrary \emph{transitive binary
  relations}; we discuss this in Appendix~\ref{app:transitive}.

\paragraph{Overview of the data structure:}
The main idea behind our succinct data structure is to develop an
algorithm for compressing a poset so that it occupies space matching
the information theory lower bound (to within lower order terms), in
the worst case.  The main difficulty is ensuring that we are able to
query the compressed structure efficiently.  Our first attempt at
designing a compression algorithm was essentially a reverse engineered
version of an enumeration proof by Kleitman and
Rothschild~\cite{KR70}.  However, though the algorithm achieved the
desired space bound, there was no obvious way to answer queries on the
compressed data due to one crucial compression step.  Though there are
several other enumeration proofs (cf., ~\cite{KR75,BPS96}), they all
appeal to a similar strategy, making the compressed data difficult to
query.  This led us to develop an alternate compression algorithm,
that uses techniques from extremal graph theory.

We believe it is conceptually simpler to present our algorithm as
having two steps.  In the first step, we preprocess the poset,
removing edges in its transitive closure graph, to create a new poset
where the height is not too large.  We refer to what remains as a
\emph{flat} poset.  We then make use of the fact that, in a flat
poset, either balanced biclique subgraphs of the transitive closure
graph--- containing $\Omega(\lg n/\lg \lg n)$ elements--- must exist,
or the poset is relatively sparsely connected.  In the former case,
the connectivity between these balanced biclique subgraphs and the
remaining elements is shown to be space efficient to encode using the
fact that all edges implied by transitivity are in the transitive
closure graph.  In the latter case, we can directly apply techniques
from the area of succinct data structures to compress the poset.

\section{\label{sec:ds}Succinct Data Structure}

In this section we describe a succinct data structure for representing
posets.  In order to refer to the elements in the poset, we assume
each element has a label.  Since our goal is to design a data
structure that occupies $n^2/4 + o(n^2)$ bits, we are free to assign
arbitrary $O(\lg n)$-bit labels to the elements, as such a labeling
will require only $O(n \lg n)$ bits.  Thus, we can assume each element
in our poset has a distinct integer label, drawn from the range
$[1,n]$.  Our data structure always refers to elements by their
labels, so often when we refer to ``element'' $a$, it means ``the
element in $S$ with label $a$'', depending on context.

\subsection{Preliminary Data Structures}

Given a bit sequence $B[1..n]$, we use $\accessop(B,i)$ to denote the
$i$-th bit in $B$, and $\rankop(S,i)$ to denote the number of 1 bits
in the prefix $B[1..i]$.  We make use of the following lemma, which
can be used to support access and rank operations on bit sequences,
while compressing the sequence to its 0th-order empirical entropy.

\begin{lemma}[Raman, Raman, Rao~\cite{RRR02}]\label{lem:rrr}
Given a bit sequence $B$ of length $n$, of which $\beta$ bits are $1$,
there is a data structure that can represent $B$ using $\lg
\binom{n}{\beta} + O(n\lg \lg n / \lg n)$ bits that can support the
operations $\accessop$, and $\rankop$ on $B$ in $O(1)$ time.
\end{lemma}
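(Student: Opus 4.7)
The plan is to use a two-level block decomposition. First I would partition $B$ into small \emph{blocks} of size $b=\lfloor(\lg n)/2\rfloor$ bits; there are $m=\lceil n/b\rceil = O(n/\lg n)$ of them. For each block $B_i$ I would store its \emph{class} $c_i$, the number of $1$s, encoded in $\lceil\lg(b+1)\rceil$ bits, together with its \emph{offset} $o_i$, an index in $\{0,\dots,\binom{b}{c_i}-1\}$ identifying $B_i$ among all length-$b$ bit strings with $c_i$ ones. The offsets are concatenated in order into one variable-length bit array.

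For the space accounting, the classes contribute $m\lceil\lg(b+1)\rceil = O(n\lg\lg n/\lg n)$ bits. For the offsets, the Vandermonde-type identity
\[
\binom{n}{\beta}=\sum_{\substack{c_1+\dots+c_m=\beta\\ 0\le c_i\le b}}\prod_{i=1}^{m}\binom{b}{c_i}
\]
yields $\sum_i \lg\binom{b}{c_i}\le\lg\binom{n}{\beta}$ for the particular class sequence of $B$, and the $m$ ceilings add only $O(n/\lg n)$ further bits. To locate offsets and answer $\rankop$, I would group the small blocks into \emph{superblocks} of $k=\lceil\lg n\rceil$ consecutive blocks, so each superblock spans $O((\lg n)^2)$ bits. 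For each superblock I would store its starting bit position in the offset array and its cumulative rank up to its start, each in $O(\lg n)$ bits, totaling $O(n/\lg n)$ bits. Within each superblock, for each of its $k$ small blocks I would additionally store the offset starting position and the cumulative rank \emph{relative to the superblock start}; both are bounded by $O((\lg n)^2)$, so each uses $O(\lg\lg n)$ bits, contributing $O(n\lg\lg n/\lg n)$ bits in total.

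The queries are then resolved with two precomputed universal tables. The first maps a class $c$ to $\lceil\lg\binom{b}{c}\rceil$, which is needed to parse variable-length offsets; the second maps a triple $(c,o,j)$, where $(c,o)$ specifies a block and $j\le b$ is a position, to the $j$-th bit of the decoded block and to the popcount of its length-$j$ prefix. Because $b\le(\lg n)/2$, both tables fit in $2^{b}\cdot\polylog n = O(\sqrt{n}\,\polylog n) = o(n)$ bits. An $\accessop(B,i)$ or $\rankop(B,i)$ query then reduces to four $O(1)$ steps: locate the enclosing superblock and read its two samples, jump to the enclosing small block inside the superblock using its within-superblock pointer, decode $(c_i,o_i)$, and finish with a single lookup in the second universal table.

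The main obstacle I expect is calibrating $b$ so that every second-order contribution stays within $O(n\lg\lg n/\lg n)$ \emph{without} inflating the offset term beyond $\lg\binom{n}{\beta}$. The choice $b=\Theta(\lg n)$ is forced from both sides: $b$ must be large enough that $m\lg(b+1) = O(n\lg\lg n/\lg n)$, yet small enough that the universal tables remain sublinear; the coefficient $\tfrac{1}{2}$ gives the table budget $2^{b}=O(\sqrt{n})$. The other delicate point is handling the variable-length offset array; the two-level superblock-plus-within-superblock pointer scheme is precisely what keeps the pointer overhead at $O(n\lg\lg n/\lg n)$ rather than $\Theta(n)$.
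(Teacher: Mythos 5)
Your construction is the standard two-level block/class/offset scheme of Raman, Raman, and Rao, which is exactly the result the paper imports by citation (the paper gives no proof of Lemma~\ref{lem:rrr} itself), and your space accounting — the Vandermonde bound for the offsets, the $O(n\lg\lg n/\lg n)$ cost of the classes and the within-superblock pointers, and the $O(\sqrt{n}\,\polylog n)$ universal tables — is correct. No gaps; this matches the cited source's approach.
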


\subsection{Flattening a Poset}

Let $\gamma > 0$ be a parameter, to be fixed later; the reader would
not be misled by thinking that we will eventually set $\gamma = \lg
n$.  We call a poset $\gamma$-\emph{flat} if it has height no greater
than $\gamma$.  In this section, we describe a preprocessing algorithm
for posets that outputs a data structure of size $O(n^2/\gamma)$ bits,
that transforms a poset into a $\gamma$-flat poset, without losing any
information about its original structure.  After describing this
preprocessing algorithm, we develop a compression algorithm for flat
posets.  Using the preprocessing algorithm together with the
compression algorithm yields a succinct data structure for posets.

Let $P = (S,\preceq)$ be an arbitrary poset with transitive closure
graph $G_c = (S,E_c)$. We decompose the elements of $S$ into
antichains based on their height within $P$.  Let $\mathcal{H}(P)$
denote the height of $P$.  All the sources in $S$ are of height $1$,
and therefore are assigned to the same set.  Each non-source element
$a \in S$ is assigned a height equal to the length of the maximum path
from a source to $a$.  We use $U_h$ to denote the set of all the
elements of height $h$, $1 \le h \le \mathcal{H}(P)$, and
$\mathcal{U}$ to denote the set $\{U_1, ..., U_{\mathcal{H}(P)}\}$.
Furthermore, it is clear that each set, $U_h$, is an antichain, since
if $a \prec b$ then the height of $b$ is strictly greater than $a$.

Next, we compute a linear extension $\mathcal{L}$ of the poset $P$ in
the following way, using $\mathcal{U}$.  The linear extension
$\mathcal{L}$ is ordered such that all elements in $U_i$ come before
$U_{i+1}$ for all $1 \le i < \mathcal{H}(P)$, and the elements within
the same $U_i$ are ordered arbitrarily within $\mathcal{L}$.  Given
any subset $S' \subseteq S$, we use the notation $S'(x)$ to denote the
element ranked $x$-th according to $\mathcal{L}$, among the elements
in the subset $S'$.  We illustrate these concepts in
Figure~\ref{fig:antichain}.  Later, this particular linear extension
will be used extensively, when we output the structure of the poset as
a bit sequence.

\begin{figure}
\centering
\includegraphics{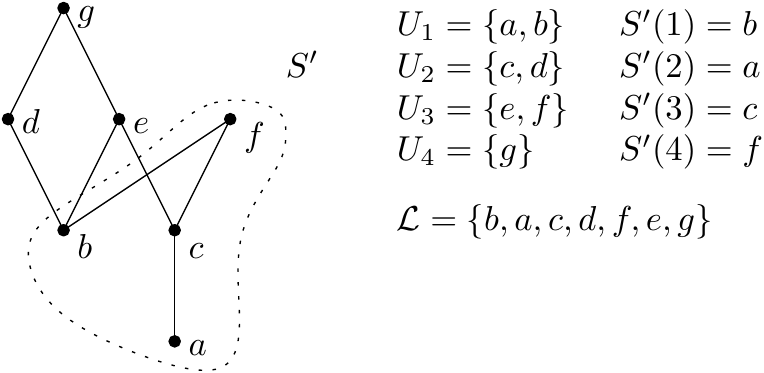}
\caption{\label{fig:antichain}The antichain decomposition of the
  poset from Figure~\ref{fig:red-clos}.  The set $S'$ is the
  set of elements surrounded by the dotted line.  Note that
  $\mathcal{L}$ is only one of many possible linear extensions.}
\end{figure}

We now describe a preprocessing algorithm to transform an arbitrary
poset $P$ into a $\gamma$-flat poset $\tilde{P}$.  We assume $P$ is
not $\gamma$-flat, otherwise we are done. Given two consecutive
antichains $U_i$ and $U_{i+1}$, we define a \emph{merge step} to be
the operation of replacing $U_{i}$ and $U_{i+1}$ by a new antichain
$U_{i}' = U_i \cup U_{i+1}$, and outputting and removing all the edges
between elements in $U_i$ and $U_{i+1}$ in the transitive closure of
$P$, i.e., $E_c(U_i \cup U_{i+1})$. We say that $U_{i+1}$ is the
\emph{upper antichain}, $U_i$ is the \emph{lower antichain}, and refer
to the new antichain $U_{i}' $ as the \emph{merged antichain}.  Each
antichain $U_j$ where $j > i+1$ becomes antichain $U_{j-1}'$ in the
\emph{residual decomposition}, after the merge step.  To represent the
edges, let $B$ be a bit sequence, storing $|U_i||U_{i+1}|$ bits.  The
bit sequence $B$ is further subdivided into sections, denoted $B^x$,
for each $x \in [1,|U_i|]$, where the bit $B^{x}[y]$ represents
whether there is an edge from $U_i(x)$ to $U_{i+1}(y)$; or
equivalently, whether $U_i(x) \prec U_{i+1}(y)$.  We say that
antichain $U_{i+1}$ is \emph{associated} with $B$, and vice versa.
The binary string $B$ is represented using the data structure of
Lemma~\ref{lem:rrr}, which compresses it to its 0th-order empirical
entropy\footnote{We note that for our purposes in this section,
  compression of the bit sequence is not required to achieve the
  desired asymptotic space bounds.  However, the fact that
  Lemma~\ref{lem:rrr} compresses the bit sequence will indeed matter
  in Section~\ref{sec:comp-flat}.}.  Note that, after the merge step,
the elements in merged antichain $U_i'$ are ordered, in the linear
extension $\mathcal{L}$, such that $U_i'(x) = U_i(x)$ for $1 \le x \le
|U_i|$ and $U_i'(y + |U_i|) =U_{i+1}(y)$ for $1 \le y \le |U_{i+1}|$.

\begin{algorithm}
\renewcommand{\thealgorithm}{}
\caption{$\textsc{Flatten}(\mathcal{U},i)$: where $i$ is the
  index of an antichain in $\mathcal{U}$.}
\begin{algorithmic}
\IF{$i > |\mathcal{U}|$} 
  \STATE\textsc{Exit}
\ENDIF
\IF{$|U_i| + |U_{i+1}| \le 2n/\gamma$}
  \STATE Perform a merge step on $U_i$ and $U_{i+1}$
\ELSE
  \STATE $i \leftarrow i + 1$
\ENDIF
\STATE $\textsc{Flatten}(\mathcal{U},i)$
\end{algorithmic}
\end{algorithm}

There are many possible ways that we could apply merge steps to the
poset in order to make it $\gamma$-flat.  The method we choose,
presented in algorithm $\textsc{Flatten}$, has the added benefit that
accessing the output bit sequences is straightforward.  Let
$\tilde{\mathcal{U}}$ be the residual antichain decomposition that
remains after executing $\textsc{Flatten}(\mathcal{U},1)$, and
$\tilde{P}$ be the resulting poset.  The number of antichains in
$\tilde{\mathcal{U}}$ is at most $\gamma$, and therefore the remaining
poset $\tilde{P}$ is $\gamma$-flat.  We make the following further
observation:

\begin{lemma}\label{lem:flatten}
$\textsc{Flatten}(\mathcal{U},1)$ outputs $O(n^2/\gamma)$ bits.
\end{lemma}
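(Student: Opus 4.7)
The plan is to bound the total number of output bits by charging them to the residual antichains $\tilde{\mathcal{U}} = \{\tilde{U}_1, \ldots, \tilde{U}_k\}$ produced by $\textsc{Flatten}$. Each merge step at position $i$ emits $|U_i|\cdot|U_{i+1}|$ bits, and reading the recursion shows that the merges performed while the algorithm sits at position $i$ combine a contiguous block of original antichains, of sizes $a_0, a_1, \ldots, a_m$, into the single residual antichain $\tilde{U}_i$.

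The structural invariant I would establish first is that whenever at least one merge occurs at position $i$, the resulting antichain satisfies $|\tilde{U}_i| = a_0 + a_1 + \cdots + a_m \le 2n/\gamma$. This is immediate from the merge guard $|U_i| + |U_{i+1}| \le 2n/\gamma$, which is rechecked against the running total before every merge: the algorithm only advances past position $i$ once the next candidate merge would break the guard, so the final accumulator is still within the threshold.

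Next I would compute the bits emitted at a single position $i$. They correspond precisely to ordered pairs $(x,y)$ with $x$ and $y$ in different original antichains of the merged block and with $x$'s antichain strictly below $y$'s, giving
\[
\sum_{0 \le j < \ell \le m} a_j a_\ell \;\le\; \tfrac{1}{2}\Bigl(\sum_{j=0}^m a_j\Bigr)^2 \;=\; \frac{|\tilde{U}_i|^2}{2}.
\]
Letting $M$ denote the set of positions at which at least one merge occurred, positions outside $M$ contribute zero bits, and for $i \in M$ the invariant gives $|\tilde{U}_i|/2 \le n/\gamma$. Combined with $\sum_i |\tilde{U}_i| = n$, the total is
\[
\sum_{i \in M} \frac{|\tilde{U}_i|^2}{2} \;\le\; \frac{n}{\gamma} \sum_{i \in M} |\tilde{U}_i| \;\le\; \frac{n^2}{\gamma}.
\]
The per-bitstring overhead of Lemma~\ref{lem:rrr}, namely $O(\ell \lg \lg n / \lg n)$ on a sequence of length $\ell$, aggregates to a lower-order factor and is absorbed into the big-$O$; as the footnote notes, compression is not actually needed to achieve the bound here.

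The main obstacle, to the extent there is one, is the bookkeeping: I would want to verify carefully that after each merge $\textsc{Flatten}$ correctly re-indexes the remaining antichains so that the running-accumulator argument for position $i$ is valid, and that consecutive merges at position $i$ really do correspond to a contiguous run of original antichains being fused into $\tilde{U}_i$. Once that is pinned down, the bound falls out of the simple quadratic telescoping above.
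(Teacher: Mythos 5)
Your proposal is correct and follows essentially the same route as the paper: decompose the merges into contiguous blocks that fuse into residual antichains of size at most $2n/\gamma$, bound the bits per block by the quadratic inequality $\sum_{j<\ell} a_j a_\ell \le \tfrac{1}{2}(\sum_j a_j)^2$ (the paper's Appendix~B inequality in slightly different clothing), and sum over blocks. The only cosmetic difference is that you close with $\sum_i |\tilde{U}_i|^2 \le (2n/\gamma)\sum_i|\tilde{U}_i| \le 2n^2/\gamma$ rather than the paper's count of at most $\gamma$ residual antichains times $O((n/\gamma)^2)$ each; both are valid.
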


\begin{proof}
Consider the decomposition $\mathcal{U}$ and let $m = \mathcal{H}(P) =
|\mathcal{U}|$.  Let $n_1, ..., n_m$ denote the number of elements in
$U_1, ..., U_m$, and $n_{s,t}$ to denote $\sum_{i = s}^t n_i$. We use
the fact that the expression $\sum_{i = s}^{t - 1} ((\sum_{j = s}^{i}
n_j) n_{i+1}) \le n_{s,t}(n_{s,t} -1) /2$, where $1 \le s < t \le m$;
we include a proof in Appendix~\ref{app:inequality}.  For each of the
at most $\gamma$ antichains in $\tilde{\mathcal{U}}$, the previous
inequality implies that \textsc{Flatten} outputs no more than
$O(n_{s,t}^2)$ bits, where $n_{s,t} = O(n/\gamma)$.  Thus, overall the
number of bits output during the merging steps is $O((n /\gamma)^2
\gamma) = O(n^2 /\gamma)$. \qed
\end{proof}

We now show how to use the output of the merge steps to answer
connectivity queries for edges that were removed by the
$\textsc{Flatten}$ algorithm:

\begin{lemma}\label{lem:flatten-index}
There is a data structure of size $O(n^2/\gamma)$ bits that, given two
elements $a$ and $b$ can determine in $O(1)$ time whether $a$ precedes
$b$, if both $a$ and $b$ belong to the same antichain in the residual
antichain decomposition $\tilde{\mathcal{U}}$.
\end{lemma}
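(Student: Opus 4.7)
The plan is to concatenate every bit sequence emitted by \textsc{Flatten}, in emission order, into a single sequence $B^{\mathrm{full}}$, and store it with Lemma~\ref{lem:rrr}. By Lemma~\ref{lem:flatten} we have $|B^{\mathrm{full}}|=O(n^2/\gamma)$, so this representation occupies $O(n^2/\gamma)$ bits and supports $\accessop$ in $O(1)$ time. The remaining task is purely one of navigation: identify, in $O(1)$ time, the exact bit of $B^{\mathrm{full}}$ that encodes $a \prec b$ whenever the two queried elements share a residual antichain.

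For this navigation I would store, for each element $a \in [1,n]$, three $O(\lg n)$-bit fields: the index $k(a)$ of its residual antichain in $\tilde{\mathcal{U}}$, the index $j(a)$ of its original antichain in $\mathcal{U}$, and its rank $r(a)$ within $U_{j(a)}$ under $\mathcal{L}$. For each residual antichain $\tilde{U}_k$ formed by merging $U_{s_k},\dots,U_{t_k}$, I would also tabulate the prefix sums $\sigma_k(i)=\sum_{\ell=s_k}^{i-1}|U_\ell|$ and the starting offsets $\omega_k(i)$ in $B^{\mathrm{full}}$ of the bit sequence emitted when $U_i$ was absorbed. Across all residual antichains these arrays hold $O(|\mathcal{U}|)=O(n)$ entries of $O(\lg n)$ bits, for $O(n\lg n)$ bits in total, which is $o(n^2/\gamma)$ for any $\gamma=o(n/\lg n)$ (in particular for the target $\gamma=\lg n$).

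Given a query $(a,b)$ with $k(a)=k(b)$: if $a=b$ return \emph{true}; if $j(a)\ge j(b)$ return \emph{false}, since $a$ and $b$ are then either incomparable (same original antichain) or $a$ sits strictly above $b$. Otherwise set $i=j(b)$. By the ordering convention stated just before Algorithm \textsc{Flatten}, when $U_i$ was absorbed the accumulated ``lower side'' lists the elements of each previously absorbed $U_\ell$ in positions $\sigma_k(\ell)+1,\dots,\sigma_k(\ell+1)$, preserving the $\mathcal{L}$-rank within each $U_\ell$. Therefore $a$ occupies lower position $x=\sigma_k(j(a))+r(a)$ and $b$ occupies upper position $y=r(b)$ within $U_i$; the target bit lies at absolute offset $\omega_k(i)+(x-1)\cdot(\sigma_k(i+1)-\sigma_k(i))+y$ in $B^{\mathrm{full}}$, and is retrieved by a single $\accessop$ call.

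The only real obstacle is bookkeeping: verifying that each addend in the offset is computable in $O(1)$ time from a table of size $o(n^2/\gamma)$. This is immediate because every such quantity is read from a fixed array indexed by an element label or by an antichain index, and combined by constant-time arithmetic. Adding the $O(n\lg n)$ metadata to the $O(n^2/\gamma)$-bit encoding of $B^{\mathrm{full}}$ yields the claimed $O(n^2/\gamma)$-bit data structure with $O(1)$ query time.
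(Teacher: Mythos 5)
Your proposal is correct and follows essentially the same route as the paper: store the bit sequences emitted by \textsc{Flatten} (whose total length is $O(n^2/\gamma)$ by Lemma~\ref{lem:flatten}), plus $O(n\lg n)$ bits of per-element and per-antichain metadata (original/residual antichain indices, ranks, and pointers/offsets into the output), and answer a query with constant-time arithmetic followed by a single $\accessop$. The only differences are cosmetic --- you concatenate the outputs into one sequence and use explicit prefix sums $\sigma_k$, where the paper keeps separate sequences with per-antichain \texttt{pnt}/\texttt{len} records --- and your offset formula correctly reflects the rank convention stated before the \textsc{Flatten} algorithm.
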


\begin{proof}
We add additional data structures to the output of $\textsc{Flatten}$
in order to support queries.  Since the labels of elements in $S$ are
in the range $[1,n]$, we can treat elements as array indices.  Thus,
it is trivial to construct an $O(n \lg n)$ bit array that, given
elements $a,b \in S$, returns values $i,i',j,j',x,x',y$ and $y'$ in
$O(1)$ time such that $U_i(x) = a$, $U_j(y) = b$, $U_{i'}(x') = a$,
$U_{j'}(y') = b$, where $U_i, U_j \in \mathcal{U}$ and $U_{i'},U_{j'}
\in \tilde{\mathcal{U}}$.  We also store an array $A$ containing
$|\mathcal{U}|$ \emph{records}.  For each antichain $U_i \in
\mathcal{U}$, if $U_i$ is the upper antichain during a merge
step\footnote{Note that, with the exception of the first merge step,
  $U_i \in \mathcal{U}$ is \emph{not} the $i$-th antichain in the
  decomposition when the merge step occurs, but we will store records
  for the index $i$ rather than some intermediate index.}, then:
$A[i].\texttt{pnt}$ points to the start of the sequence, $B$,
associated with $U_i$, and; $A[i].\texttt{len}$ stores the length of
the lower antichain. Recall that after the merge step, the element
$U_i(x)$ has rank $x + A[i].\texttt{len}$ in the merged antichain.
Thus, $A[i].\texttt{len}$ is the \emph{offset} of the ranks of the
elements of $U_i$ within the merged antichain.  These extra data
structures occupy $O(n\lg n)$ bits and are dominated by the size of
the output of $\textsc{Flatten}$, so the claimed space bound holds by
Lemma~\ref{lem:flatten}.

We now discuss how to answer a query.  Given $a,b \in S$, if $i' \neq
j'$, then we return ``different antichains''.  Otherwise, if $i = j$,
then we return ``no''.  Otherwise, assume without loss of generality
that $i > j$.  Thus, $U_i$ is the upper antichain, and
$A[i].\texttt{pnt}$ is a pointer to a sequence $B$, whereas $U_j$ is a
subset of the lower antichain $\hat{U}_k$, and $A[j].\texttt{len}$ is
the offset of the elements in $U_j$ within $\hat{U}_k$.  Let $z = y +
A[j].\texttt{len}$, and return ``yes'' if $B^z[x] = 1$ and ``no''
otherwise. Section $B^z$ begins at the $((z-1)|U_i|)$-th bit of $B$ so
we can access $B^z[x]$ in $O(1)$ time.  \qed
\end{proof}

\subsection{Compressing Flat Posets\label{sec:comp-flat}}

In this section we describe a compression algorithm for flat posets
that, in the worst case, matches the information theory lower bound to
within lower order terms.  We begin by stating the following lemma,
which is a constructive deterministic version of a well known theorem
by K\"{o}v\'{a}ri, S\'{o}s, and Tur\'{a}n~\cite{KST1954}:

\begin{lemma}[Mubayi and Tur\'{a}n~\cite{MT2010}]\label{lem:kst}
There is a constant $c_\text{min}$ such that, given a graph with $|V|
\ge c_\text{min}$ vertices and $|E| \ge 8|V|^{3/2}$ edges, we can find
a balanced biclique $K_{q,q}$, where $q = \Theta(\lg |V| / \lg (|V|^2
/ |E|))$, in time $O(|E|)$.
\end{lemma}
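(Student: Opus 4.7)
The plan is to combine the classical K\"ov\'ari--S\'os--Tur\'an double-counting argument, which yields existence, with a greedy extraction procedure that makes the biclique explicit in linear time. The key identity is
\[ \sum_{v \in V} \binom{d(v)}{q} = \sum_{T \subseteq V,\, |T|=q} |N(T)|, \]
where $N(T) = \bigcap_{u \in T} N(u)$ is the common neighborhood of $T$; both sides count pairs $(v,T)$ with $T \subseteq N(v)$. By convexity of $x \mapsto \binom{x}{q}$ applied to $\sum_v d(v) = 2|E|$, the left side is at least $|V|\binom{2|E|/|V|}{q}$. If the graph contained no $K_{q,q}$, then every $T$ of size $q$ would have $|N(T)| \le q-1$, bounding the right side by $(q-1)\binom{|V|}{q}$. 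Combining yields the standard Zarankiewicz-type inequality; inverting for $q$ under the hypothesis $|E| \ge 8|V|^{3/2}$ (which forces $|V|^2/|E| \le |V|^{1/2}/8$, keeping $\lg(|V|^2/|E|)$ bounded away from zero once $|V| \ge c_{\min}$) produces the claimed $q = \Theta(\lg |V| / \lg(|V|^2/|E|))$.

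To make the argument constructive, I would build one side $T$ of the biclique greedily. Initialize $T \leftarrow \emptyset$ and candidate set $C \leftarrow V$. At each round, choose $v \in V \setminus T$ maximizing $|C \cap N(v)|$, set $T \leftarrow T \cup \{v\}$ and $C \leftarrow C \cap N(v)$. Applying the same convexity estimate to the subgraph induced on $C$ shows that $|C|$ contracts by a factor of at most $|V|^2/(2|E|)$ per round (up to lower-order slack absorbed by $c_{\min}$), so after $q$ rounds one still has $|C| \ge q$, and any $q$ vertices of $C$ together with $T$ form the desired $K_{q,q}$.

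The main obstacle is delivering the running time $O(|E|)$. A direct implementation of the greedy step costs $\Theta(|E|)$ per round, hence $\Theta(q|E|)$ overall, which is superlinear since $q$ grows with $n$. Following Mubayi and Tur\'an, I would avoid per-round recomputation by representing $C$ implicitly and folding all $q$ intersection operations into a single sweep: preprocess sorted adjacency lists, maintain \emph{hit counters} on vertices recording membership in $\bigcap_{v \in T} N(v)$, and update these counters incrementally as each new greedy pick is committed, so that each edge is examined only a constant number of times in total. Verifying that this bookkeeping achieves genuine linear time, and that the constants $8$ and $c_{\min}$ leave enough slack in the counting step to absorb the lower-order terms that appear when passing from binomial coefficients to polynomial bounds, would be the most delicate part of the write-up.
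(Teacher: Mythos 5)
First, a framing point: the paper does not prove this lemma at all --- it is imported as a black box from Mubayi and Tur\'an \cite{MT2010} --- so there is no in-paper proof to compare your attempt against, and you are really reconstructing the cited result. Judged on its own terms, your outline follows essentially the right route: the K\H{o}v\'ari--S\'os--Tur\'an double count for existence, plus a greedy common-neighbourhood refinement to make the biclique explicit. Your identity $\sum_v \binom{d(v)}{q}=\sum_{|T|=q}|N(T)|$ is correct, and the linear-time bookkeeping you gesture at can indeed be made rigorous: a vertex leaves the candidate set $C$ at most once, so charging the $O(d(c))$ counter decrements to that departure, together with a bucket queue whose maximum only decreases, gives $O(|V|+|E|)$ total work for all $q$ rounds.

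The genuine gap is the per-round contraction claim. The greedy step only guarantees $\max_{v}|C\cap N(v)|\ \ge\ \frac{1}{|V|}\sum_{c\in C}d(c)$, and this is at least $|C|\cdot 2|E|/|V|^{2}$ only when the vertices currently in $C$ have degree at least the global average. ``The convexity estimate applied to the subgraph induced on $C$'' does not deliver this --- the induced subgraph on $C$ is not even the relevant object, since the greedy vertex is drawn from all of $V$ and the quantity being averaged is $\sum_{c\in C}d_G(c)$. After one round $C=N(v_1)$ may consist largely of low-degree vertices, and then $|C|$ can collapse far faster than the factor $|V|^{2}/(2|E|)$ you assert; this is not lower-order slack absorbable by $c_{\min}$. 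The standard repair is to first pass to a subgraph of minimum degree at least half the average degree (iteratively deleting low-degree vertices in $O(|V|+|E|)$ time, which costs only constant factors in the density and hence preserves the $\Theta$ in $q$); with $\delta\ge |E|/|V|$ in hand, the contraction $|C\cap N(v)|\ge |C|\,\delta/|V|\ge |C|\,|E|/|V|^{2}$ holds in every round and the rest of your argument (disjointness of $T$ and $C$, full adjacency between them, and the inversion giving $q=\Theta(\lg|V|/\lg(|V|^{2}/|E|))$ under $|E|\ge 8|V|^{3/2}$) goes through. As a minor point, your first paragraph's existence argument is never used by the constructive second paragraph and could be dropped.
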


Let $\tilde{P}$ be a $(\lg n)$-flat poset, $G_c = (S,E_c)$ be its
transitive closure, and $\tilde{\mathcal{U}} = \{U_1, ..., U_m\}$ be
its antichain decomposition (discussed in the last section), which
contains $m \le \lg n$ antichains.  We now prove our key lemma,
which is crucial for the compression algorithm.

\begin{lemma}[Key Lemma]\label{lem:clos-connect}
Consider the subgraph $G_{\Upsilon} = G_c(U_i \cup U_{i+1})$ for some
$1 \le i < m$, and ignore the edge directions so that $G_{\Upsilon}$ is
undirected.  Suppose $G_{\Upsilon}$ contains a balanced biclique
subgraph with vertex set $D$, and $|D| = \tau$.  Then there are at
most $2^{\tau/2 +1} -1$ ways that the vertices in $D$ can be connected
to each vertex in $S \setminus (U_i \cup U_{i+1})$.
\end{lemma}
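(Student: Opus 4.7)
\medskip
\noindent\textbf{Proof plan for Lemma~\ref{lem:clos-connect}.}
The plan is to exploit two structural facts in tandem: (i) since $U_i$ and $U_{i+1}$ are antichains, the biclique $D$ must split as $D = X \cup Y$ with $X \subseteq U_i$, $Y \subseteq U_{i+1}$, and $|X|=|Y|=\tau/2$, with every $x \in X$ satisfying $x \prec y$ for every $y \in Y$; and (ii) transitivity then forces strong dependencies among the $2\tau$ possible ``connections'' between an outside vertex $v$ and the elements of $D$.

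\medskip
\noindent First I would fix an arbitrary $v \in S \setminus (U_i \cup U_{i+1})$, say with $v \in U_j$ where $j \ne i, i+1$, and show that all connections between $v$ and $D$ point in the same direction. If $v \prec d$ for some $d \in D$, then $v$'s height is strictly less than $d$'s height, so $j < i+1$, i.e., $j \le i-1$ (since $j \ne i, i+1$, so in fact $j < i$); symmetrically, if $d' \prec v$ for some $d' \in D$, then $j > i$, which forces $j \ge i+2$. These two possibilities are mutually exclusive, so exactly one of the following holds: $v$ lies strictly below all of $D$ (in which case every connection is of the form $v \prec d$), or $v$ lies strictly above all of $D$ (in which case every connection is of the form $d \prec v$), or $v$ is incomparable to every element of $D$.

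\medskip
\noindent Next I would apply transitivity along the biclique. In the ``below'' case, define $X_v = \{x \in X : v \prec x\}$ and $Y_v = \{y \in Y : v \prec y\}$. If $X_v \ne \emptyset$, pick any $x \in X_v$; since $x \prec y$ for every $y \in Y$, transitivity forces $Y_v = Y$. Hence the set of possible patterns splits into: (a) $X_v \ne \emptyset$ and $Y_v = Y$, giving $2^{\tau/2} - 1$ choices for $X_v$; and (b) $X_v = \emptyset$, with $Y_v$ an arbitrary subset of $Y$, giving $2^{\tau/2}$ choices. The ``above'' case is symmetric, with the roles of $X$ and $Y$ swapped (if any $y \in Y$ satisfies $y \prec v$, then every $x \in X$ satisfies $x \prec v$). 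The ``incomparable'' case corresponds to the single all-zero pattern, which is already counted in case (b) with $Y_v = \emptyset$.

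\medskip
\noindent Summing yields $(2^{\tau/2} - 1) + 2^{\tau/2} = 2^{\tau/2+1} - 1$ patterns per direction; but since each individual $v$ lies in only one of the two directions (depending on $j<i$ vs.\ $j>i+1$), the bound of $2^{\tau/2+1}-1$ applies per vertex, as claimed. The only slightly delicate point, and the place where care is needed, is making sure the two directions are not both available to a single vertex, so that we do not double the count; this is handled by the height argument above. Everything else is a direct application of transitivity across the complete bipartite structure of $D$.
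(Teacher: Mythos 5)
Your proof is correct and follows essentially the same route as the paper's: use the antichain/height structure to show each outside vertex connects to $D$ in only one direction, then push transitivity across the biclique to force one side of $D$ to be fully connected whenever the other side is touched. The only difference is cosmetic — you fold the ``no connections'' pattern into the $X_v=\emptyset$ case rather than counting it separately, yielding $(2^{\tau/2}-1)+2^{\tau/2}$ instead of the paper's $2(2^{\tau/2}-1)+1$, which is the same total.
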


\begin{proof}
Each vertex $v \in S \setminus (U_i \cup U_{i+1})$ is in $U_j$, where,
either $j > i+1$ or $j < i$.  Without loss of generality, consider the
case where $j > i+1$.  If $v$ is connected to any vertex $u \in D\cap
U_{i+1}$, then $v$ is connected to \emph{all} vertices in $D \cap
U_i$.  Thus, $v$ can be connected to the vertices in $D \cap U_{i+1}$
in $2^{\tau/2}-1$ ways, or to the vertices in $D \cap U_i$ in
$2^{\tau/2}-1$ ways, or not connected to $D$ at all.  In total, there
are $2^{\tau/2+1} - 1$ ways to connect $v$ to $D$. \qed
\end{proof}

\begin{algorithm}
\renewcommand{\thealgorithm}{}
\caption{$\textsc{Compress-Flat}(\hat{P},\hat{n},\hat{\mathcal{U}},\hat{m})$:
  where $\hat{P} = (\hat{S},\preceq)$ is a $(\lg n)$-flat poset of
  $\hat{n} \le n$ elements, and $\hat{\mathcal{U}} = \{\hat{U}_1, ...,
  \hat{U}_{\hat{m}}\}$ is a decomposition of the elements in $\hat{P}$
  into $\hat{m}$ antichains.}
\begin{algorithmic}[1]

\IF{$\hat{m} = 1$} 

 \STATE \label{line:bc1}EXIT

\ELSIF{\label{line:dense-cond}$|\hat{U}_i \cup \hat{U}_{i+1}| \ge
  c_\text{min}$ and $|E_c(\hat{U}_i \cup \hat{U}_{i+1})| \ge (\hat{n}
  / \lg \hat{n})^2$, for an $i \in [1,\hat{m}]$}

 \STATE \label{line:dense}Apply Lemma~\ref{lem:kst} to the subgraph
 $G_c(\hat{U}_i \cup \hat{U}_{i+1})$.  This computes a balanced
 biclique with vertex set $D \subset \hat{U}_i \cup \hat{U}_{i+1}$
 such that $\tau = |D| = \Omega(\lg \hat{n} / \lg \lg \hat{n})$.
 
 \STATE For each element $b \in \hat{U}_i \cap D$ output a bit
 sequence $W^{-}_b$ of $|\hat{U}_{i+1}|$ bits, where $W^{-}_b[k] = 1$
 iff $b \prec \hat{U}_{i+1}(k)$.
 
 \STATE For each element $a \in \hat{U}_{i+1} \cap D$ output a bit
 sequence $W^{+}_a$ of $|\hat{U}_i|$ bits, where $W^{+}_a[k] = 1$ iff
 $\hat{U}_{i}(k) \prec a$.

 \STATE Let $H = \hat{S} \setminus (\hat{U}_i \cup \hat{U}_{i+1})$. Output an array of
 integers $Y$, where $Y[k] \in [0,2^{\tau/2 +1}-1]$ and indicates how
 $H(k)$ is connected to $D$ (see Lemma~\ref{lem:clos-connect}).

 \STATE Set $\hat{U}_i \leftarrow \hat{U}_i \setminus D$

 \STATE Set $\hat{U}_{i+1} \leftarrow \hat{U}_{i+1} \setminus D$

 \STATE \label{line:dense-end}$\textsc{Compress-Flat}(\hat{P}\setminus D , \hat{n} -\tau,
 \hat{\mathcal{U}},\hat{m})$

\ELSE

 \STATE \label{line:sparse}Perform a merge step on $\hat{U}_{1}$ and $\hat{U}_{2}$

 \STATE Set $\hat{m} \leftarrow \hat{m} -1$ 

 \STATE \label{line:sparse-end}$\textsc{Compress-Flat}(\hat{P},\hat{n},\hat{\mathcal{U}},\hat{m})$

\ENDIF
\end{algorithmic}
\end{algorithm}

Consider the algorithm $\textsc{Compress-Flat}$.  The main idea is to
repeatedly apply Lemma~\ref{lem:kst} to two consecutive antichains the
antichain decomposition that have many edges--- defined on
line~\ref{line:dense-cond}--- between them in the transitive closure
graph.  If no such antichains exist, then we apply merge steps.  The
algorithm terminates when only one antichain remains.  We refer to the
case on lines~\ref{line:dense}-\ref{line:dense-end} as the \emph{dense
  case}, and the case on lines~\ref{line:sparse}-\ref{line:sparse-end}
as the \emph{sparse case}.  We now prove that the size of the output
of the compression algorithm matches the information theory lower
bound to within lower order terms.

\begin{lemma}\label{lem:compress-flat}
The output of $\textsc{Compress-Flat}(\tilde{P},n,\tilde{U},m)$ is no
more than $n^2/4 + O((n^2 \lg \lg n) / \lg n)$ bits.
\end{lemma}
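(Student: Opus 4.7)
The plan is to account for the output of \textsc{Compress-Flat} case by case. The dense case will produce the leading $n^2/4$ term, via the factor-of-two savings supplied by Lemma~\ref{lem:clos-connect}, while the sparse case contributes only to the $O(n^2 \lg\lg n / \lg n)$ error.

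For a single dense iteration with biclique size $\tau_j$ and current element count $\hat{n}_{j-1}$, the $W^{\pm}$ sequences contribute $(\tau_j/2)(|\hat{U}_i| + |\hat{U}_{i+1}|)$ bits, and $Y$ contributes at most $(\hat{n}_{j-1} - |\hat{U}_i| - |\hat{U}_{i+1}|)(\tau_j/2+1)$ bits by Lemma~\ref{lem:clos-connect}; summing, the per-step cost is at most $(\tau_j/2)\hat{n}_{j-1} + \hat{n}_{j-1}$ bits. To sum over all $T$ dense iterations, set $S = \sum_j \tau_j \leq n$ and use $\hat{n}_{j-1} = n - \sum_{l<j}\tau_l$; a short manipulation gives $\sum_j \tau_j \hat{n}_{j-1} = nS - (S^2 - \sum_j \tau_j^2)/2 \leq n^2/2 + (1/2)\sum_j \tau_j^2$. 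The dense triggering condition feeds Lemma~\ref{lem:kst} an input with $|V| \geq \hat{n}/\lg\hat{n}$ and $|V|^2/|E| = O((\lg n)^2)$, forcing $\tau_j = \Omega(\lg n / \lg\lg n)$ (for $\hat{n}_j = n^{\Omega(1)}$; the small-$\hat{n}$ tail has total cost $O(\mathrm{polylog}\,n)$ and is absorbed). Hence $\sum_j \tau_j^2 = O(n\lg n/\lg\lg n)$ and $T = O(n \lg\lg n/\lg n)$, giving $(1/2)\sum_j \tau_j \hat{n}_{j-1} \leq n^2/4 + o(n^2/\lg n)$ and $\sum_j \hat{n}_{j-1} \leq Tn = O(n^2 \lg\lg n/\lg n)$. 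Thus the dense case contributes $n^2/4 + O(n^2 \lg\lg n/\lg n)$ bits overall.

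For the sparse case, each merge outputs a bit sequence of length $L_t = |\hat{U}_1||\hat{U}_2|$ with $\beta_t < (\hat{n}/\lg \hat{n})^2 \leq (n/\lg n)^2$ ones (the $c_\text{min}$-size corner case contributes only a constant number of bits per step and is absorbed). A potential $\Phi = \sum_{a<b} |\hat{U}_a||\hat{U}_b| \leq n^2/2$ decreases by exactly $L_t$ at each sparse merge and is non-increasing under dense steps, so $\sum_t L_t \leq n^2/2$. Invoking Lemma~\ref{lem:rrr} on each sequence, the entropy parts total $\sum_t \lg\binom{L_t}{\beta_t} \leq \sum_t O(\beta_t \lg\lg n) = O(n^2 \lg\lg n / \lg n)$ (using that there are $O(\lg n)$ merges and $\sum_t \beta_t \leq \lg n \cdot (n/\lg n)^2$), while the RRR redundancies total $O((\sum_t L_t)\lg\lg n/\lg n) = O(n^2 \lg\lg n/\lg n)$.

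The main obstacle is arriving at $n^2/4$ rather than $n^2/2$ in the dense case; this depends entirely on Lemma~\ref{lem:clos-connect} halving the per-external-element cost from $\tau_j$ to $\tau_j/2+1$, and on the algebraic fact that $\sum_j \tau_j \hat{n}_{j-1} \leq n^2/2$ (rather than the crude $n \cdot S = n^2$). The lower bound $\tau_j = \Omega(\lg n/\lg\lg n)$ is precisely what is needed to simultaneously control the number of dense rounds and absorb both the $\tau_j^2$ and $(+1)$-per-entry corrections into the stated error slack; everything else reduces to routine bookkeeping against Lemmas~\ref{lem:rrr} and~\ref{lem:kst}.
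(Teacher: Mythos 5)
Your proof is correct and follows essentially the same route as the paper's: the same dense/sparse split, the same use of the Key Lemma to charge only $\tau/2+1$ bits per external element, the same reliance on $\tau=\Omega(\lg n/\lg\lg n)$ to control both the number of dense rounds and the additive slack, and the same $\lg\binom{t}{\beta}\le\beta\lg(et/\beta)+O(1)$ bound for the sparse output. The only differences are presentational --- you unroll the paper's induction into a direct Abel-type summation (note the identity $\sum_j\tau_j\hat n_{j-1}/2=(n^2-\hat n_T^2)/4+\tfrac14\sum_j\tau_j^2$ is exactly what the induction telescopes to) and replace the appendix inequality bounding the total merged length by an equivalent potential-function argument --- and the only blemishes are harmless overstatements (the small-$\hat n$ tail is $O(n)$ rather than $O(\polylog n)$, and $\sum_j\tau_j^2$ needs an upper bound $\tau_j=O(\lg n)$ from Lemma~\ref{lem:kst}, giving $O(n\lg n)$; both are still absorbed by the error term).
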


\begin{proof}[Sketch]
In the base case (line~\ref{line:bc1}), the lemma trivially holds
since nothing is output.  Next we give the intuition to show that the
total output from all the sparse cases cannot exceed $O((n^2\lg \lg n)
/ \lg n)$ bits.  Recall that the representation of Lemma~\ref{lem:rrr}
compresses to $\lg \lceil \binom{t}{\beta}\rceil + O(t \lg \lg t/ \lg
t)$ bits, where $t$ is the length of the bit sequence, and $\beta$ is
the number of $1$ bits.  We use the fact that $\lg \lceil
\binom{t}{\beta} \rceil \le \beta \lg (et/\beta) +
O(1)$~\cite[Section~4.6.4]{M07}.  For a single pass through the sparse
case, the total number of bits represented by $B$ is $t=O(n^2)$, and
$\beta=O((n / \lg n)^2)$ bits are $1$'s.  Thus, the first term in the
space bound to represent $B$ using Lemma~\ref{lem:rrr} (applying the
inequality) is $O((n^2 \lg \lg n)/ \lg^2 n)$ bits.  Since we can enter
the sparse case \emph{at most} $\lg n$ times before exiting on
line~\ref{line:bc1}, the total number of bits occupied by the first
term is bounded by $O((n^2 \lg \lg n) / \lg n)$.  To ensure the second
term ($O(t \lg \lg t / \lg t)$) in the space bound of
Lemma~\ref{lem:rrr} does not dominate the cost, we use the standard
technique of applying Lemma~\ref{lem:rrr} to the concatenation of all
the bit sequences output in the sparse case, rather than each
individual sequence separately (see Appendix~\ref{app:compress-flat}
for more details).

We now prove the lemma by induction for the dense case.  Let
$\mathcal{S}(n)$ denote the number of bits output by
$\textsc{Compress-Flat}(\tilde{P},n,\tilde{U},m)$.  \emph{Inductive
  step}: We can assume $\mathcal{S}(n_0) \le n_0^2/4 + c_0 (n_0^{2}
\lg \lg n_0 ) / \lg n_0$ for all $1 \le n_0 < n$, where $n \ge 2$, and
$c_0 > 0$ is some sufficiently large constant.  All the additional
self-delimiting information--- for example, storing the length of the
sequences output on lines 5-7--- occupies no more than $c_1\lg n$ bits
for some constant $c_1 > 0$.  Finally, recall that $\tau \ge c_2 \lg n
/ \lg \lg n$ for some constant $c_2 > 0$.  We have:

\begin{align*}
\mathcal{S}(n) \enspace = \enspace & \frac{\tau}{2}\left(|U_i| +
|U_j|\right) + (n - (|U_i| + |U_j|)) \lg(2^{\tau / 2 +1}) + c_1\lg n
+ \mathcal{S}(n - \tau) \displaybreak[0]\\ 
\le \enspace & (\frac{\tau}{2} +1) n + c_1 \lg n + \frac{1}{4}\left(n^2 -
2n\tau+ \tau^2\right) + \frac{c_0 \lg \lg
  n}{\lg(n-\tau)}\left(n^2 - 2n\tau + \tau^2\right) \displaybreak[0]\\
\le \enspace & c_3 n + \frac{n^2}{4} + \frac{c_0 n^2 \lg \lg
  n}{\lg(n-\tau)} - c_4 n  \text{\enspace \enspace($c_4 < c_0c_2$, $c_3 > 1$)} \displaybreak[0]\\
\le \enspace & \frac{n^2}{4} + \frac{c_0 n^2 \lg \lg
  n}{\lg(n-\tau)} - c_5 n  \text{\enspace \enspace ($c_5 = c_4 - c_3$)} \displaybreak[0]\\
\end{align*}

\noindent
Note that through our choice of $c_0$ and $c_3$, we can ensure that
$c_5$ is a positive constant.  If $\lg(n -\tau) = \lg n$, then the
induction step clearly holds.  The alternative case can only happen
when $n$ is greater than a power of 2, and $n-\tau$ is less than a
power of two, due to the ceiling function on $\lg$.  Thus, the
alternative case only occurs once every $O(n / \lg n)$ times we remove
a biclique, since each biclique contains $O(\lg n)$ elements.  By
charging this extra cost to the rightmost negative term, the induction
holds. \qed
\end{proof}

We now show how to support precedence queries on a $(\lg n)$-flat
poset.  As in the previous section, if element $a$ is removed in the
dense case, we say $a$ is \emph{associated} with the output on lines
6-9.  Similarly, for each antichain $U_i \in \tilde{\mathcal{U}}$
involved in a merge step as the upper antichain in the sparse case, we
say that $U$ is \emph{associated} with the output of that merge step,
and vice versa.

\begin{lemma}\label{lem:flat-index}
Let $\tilde{P}$ be a $(\lg n)$-flat poset on $n$ elements, with
antichain decomposition $\tilde{U} = \{U_1, ..., U_m\}$.  There is a
data structure of size $n^2 / 4 + O((n^2 \lg \lg n) / \lg n)$ bits
that, given two elements $a$ and $b$, can report whether $a$ precedes
$b$ in $O(1)$ time.
\end{lemma}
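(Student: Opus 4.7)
The plan is to augment the output of $\textsc{Compress-Flat}$ with $O(n\lg n)$ bits of auxiliary indexing data---a term comfortably absorbed into the slack $O((n^2\lg\lg n)/\lg n)$ of Lemma~\ref{lem:compress-flat}; since element labels lie in $[1,n]$, the indexing reduces to array lookups, as in Lemma~\ref{lem:flatten-index}. For each element $a\in S$ I would store its original antichain index $p$ (with $a\in U_p$), its rank in $U_p$, a flag that says whether $a$ is ever removed in a dense case, and, if so, a pointer to the sequence $W^-_a$ or $W^+_a$, a pointer to the $Y$-array of the same pass, the side of the biclique $D$ on which $a$ sits, $a$'s index within that side, and the rank of $a$ in the prevailing antichain $\hat U_\cdot$ at the moment of removal. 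Each dense pass also carries a small header giving the antichain lengths and the bit-width $\tau/2+1$ of its $Y$-entries, and each sparse-case merge carries the same metadata used in Lemma~\ref{lem:flatten-index}. Since there are $O(n)$ passes and each element contributes $O(1)$ records of $O(\lg n)$ bits, the total auxiliary space is $O(n\lg n)$ bits.

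To answer $a\preceq b$, I first dispose of the trivial cases ($a=b$, or $p=q$) and assume $p<q$. I then consult the stored event data of $a$ and $b$: the bit that records the edge is always stored at the \emph{earlier} of the two dense-case removal events (or, if neither occurs, in a sparse merge). Say $a$ is removed first, as part of biclique $D$ extracted from $(\hat U_i,\hat U_{i+1})$, and let $L\in\{i,i+1\}$ be $a$'s side. Because heights respect the original antichain indices, $b$'s antichain $\hat U_j$ at that instant satisfies $j\ge L$. If $j=L$ the two elements are in a common antichain of $\tilde P$ and the answer is false; if $L=i$ and $j=i+1$, the answer is the single bit $W^-_a[k]$ where $k$ is $b$'s rank in $\hat U_{i+1}$; if $j>i+1$, I read $Y[k']$ with $k'$ the position of $b$ in $H$ and, using the three-case classification from the proof of Lemma~\ref{lem:clos-connect}, extract by word-RAM bit operations on the $(\tau/2+1)$-bit entry the bit corresponding to $a$'s index within $D$. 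The case where $b$ is removed first is symmetric, with $W^+_b$ playing the role of $W^-_a$. Finally, if \emph{neither} element is ever removed, the edge is recorded in the bit sequence $B$ of the unique sparse merge that first folds $U_p$ and $U_q$ into one antichain, and its position is computed exactly as in Lemma~\ref{lem:flatten-index}.

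The main obstacle is that the ranks the query needs are \emph{dynamic}: $b$'s rank in $\hat U_{i+1}$ or in $H$ at the moment $a$ is removed generally differs from $b$'s rank in $U_q$, because earlier dense passes may have deleted intermediate elements and earlier sparse passes may have shifted antichain indices. I would handle this by recording, at the first moment an element becomes a participant in a pass (as a vertex of the biclique, an opposite-side vertex, or an $H$-vertex), the rank it holds in the prevailing antichain at that instant; each element contributes only $O(1)$ such records because its dense-case removal occurs at most once and afterwards it is indexed through the fixed offsets stored with that biclique. Correctness of the scheme then reduces to the invariant that every edge of the transitive closure of $\tilde P$ is written out \emph{exactly once}---inside the $W$/$Y$ data of the first dense pass that touches one of its endpoints, or otherwise in the unique sparse $B$ that first merges its endpoints' antichains---so the query above always consults exactly the right record.
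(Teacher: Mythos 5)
Your overall architecture matches the paper's (per-element records pointing to the pass in which an edge bit was written, plus offset arithmetic as in Lemma~\ref{lem:flatten-index}), but there is a genuine gap exactly at the point you flag as ``the main obstacle,'' and your proposed fix does not close it. The rank that the query needs is the rank of $b$ among the elements \emph{still surviving at pass $\ell$}, where $\ell$ is the pass in which $a$ was removed --- and $\ell$ is determined only at query time. An element $b$ that survives many dense passes has a different such rank at every pass, since each pass deletes $\tau=\Theta(\lg n/\lg\lg n)$ elements scattered across the antichains. Recording $b$'s rank ``at the first moment it becomes a participant'' is not enough: $b$ participates (as an $H$-vertex or opposite-side vertex) in up to $r=\Theta(n\lg\lg n/\lg n)$ passes, and you cannot recover its rank at pass $\ell$ from $O(1)$ per-element records plus $O(\lg n)$-bit per-pass headers. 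This is why the paper does \emph{not} get away with $O(n\lg n)$ bits of auxiliary data: it stores, for each $\ell\le r+1$, a bit sequence $M'_\ell$ marking the survivors $S_\ell$, represented with Lemma~\ref{lem:rrr} so that $M_\ell(x)=\rankop(M'_\ell,x)$ answers the survivor-rank query in $O(1)$ time. These structures cost $O(n)$ bits each, hence $O((n^2\lg\lg n)/\lg n)$ bits in total --- a lower-order but superlinear term that is the real content of the index. Without something equivalent, your query algorithm cannot locate the correct position inside $W^{\pm}$ or $Y$.

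There is a second, smaller error in your edge-location invariant. You assert the edge bit always lives at the earlier of the two dense-removal events, ``or, if neither occurs, in a sparse merge.'' But sparse merges and dense removals interleave: if the antichains $U_p$ and $U_q$ are merged in a sparse step \emph{before} either endpoint is removed in a dense pass, the edge $(a,b)$ was output in that merge's sequence $B$, even though one endpoint is later deleted as a biclique vertex. Your rule ``if $j=L$ then the elements share an antichain of $\tilde P$ and the answer is false'' is wrong in this situation --- a current antichain $\hat U_j$ can be a \emph{merged} antichain containing comparable elements whose edge was already written out. The paper handles this with the \texttt{delta} fields (cases 1(b), 3(b) and 4(a) in Appendix~\ref{app:flat-index}), which compare the time of the merge step against the time of the dense removal to decide whether to consult $B$ or $W^{\pm}/Y$. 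Your proposal needs both repairs before it establishes the lemma.
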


\begin{proof}[Sketch]
We augment the output of $\textsc{Compress-Flat}$ with additional data
structures in order to answer queries efficiently.  Let $D_0$ be an
empty set. We denote the first set of elements removed in a dense case
as $D_1$, the second set as $D_2$ and so on.  Let $D_r$ denote the
last set of elements removed in a dense case, for some $r = O(n\lg \lg
n/\lg n)$.  Let $S_\ell = S/(\cup_{i = 0}^{\ell-1}D_i)$, for $1 \le
\ell \le r+1$.  We define $M_\ell(x)$ to be the number of elements $a
\in S_\ell$ such that $S(y) = a$, and $y \le x$. We now discuss how to
compute $M_\ell(x)$ in $O(1)$ time using a data structure of size
$O(n^2 \lg \lg n/ \lg n)$ bits.  Define $M'_\ell$ to be a bit
sequence, where $M'_\ell[x] = 1$ iff $S(x) \in S_{\ell}$, for $x \in
[1,n]$.  We represent $M'_{\ell}$ using the data structure of
Lemma~\ref{lem:rrr}, for $1 \le \ell \le r + 1$.  Overall, these data
structures occupy $O(n^2 \lg \lg n/ \lg n)$ bits, since $r = O((n \lg
\lg n) / \lg n)$, and each binary string occupies $O(n)$ bits by
Lemma~\ref{lem:rrr}.  To compute $M_\ell(x)$ we return
$\rankop_1(M'_\ell,x)$, which requires $O(1)$ time by
Lemma~\ref{lem:rrr}.  By combining the index just described with
techniques similar in spirit to those used in
Lemma~\ref{lem:flatten-index}, we can support precedence queries in
$O(1)$ time.  The idea is to find the output associated with the query
elements, and find the correct bit in the output to examine using the
index just described; the details can be found in
Appendix~\ref{app:flat-index}.
\end{proof}

Theorem~\ref{thm:main} follows by combining
Lemmas~\ref{lem:flatten-index} (with $\gamma$ set to $\lg n$)
and~\ref{lem:flat-index}.

\section{\label{sec:remark}Concluding remarks}

In this paper we have presented the first succinct data structure for
arbitrary posets.  For a poset of $n$ elements, our data structure
occupies $n^2/4 + o(n^2)$ bits and can support precedence queries in
$O(1)$ time.  This is equivalent to supporting $O(1)$ time queries of
the form, ``Is the edge $(a,b)$ in the transitive closure graph of
$P$?''  

Our first remark is that if we want to support edge queries on the
transitive reduction instead of the closure, a slightly simpler data
structure can be used.  The reason for this simplification is that for
the transitive reduction, our key lemma does not require the
antichains containing the biclique to be consecutive, and,
furthermore, we can ``flatten'' the transitive reduction in a much
simpler way than by using Lemma~\ref{lem:flatten-index}.  We defer
additional details to the full version. Our second remark is that, in
terms of practical behaviour, there are alternative representations of
bit sequences that support our required operations efficiently (though
not $O(1)$ time), and have smaller lower order terms in their space
bound (e.g.,~\cite{OS07}).  In practice, using these structures would
reduce the lower order terms significantly.  Finally, we remark that
we can report the neighbours of an arbitrary element in the transitive
closure graph efficiently, without asymptotically increasing the space
bound of Theorem~\ref{thm:main}.  This is done by encoding the
neighbours using a bit sequence, if there are few of them, and
checking all $n-1$ possibilities via queries to the data structure of
Theorem~\ref{thm:main}, if there are many. We defer the details until
the full version.

\bibliographystyle{plain} 
\bibliography{poset}

\newpage

\appendix
\section{Generalization to Transitive Binary Relations\label{app:transitive}}

In this section we discuss how to generalize Theorem~\ref{thm:main} to
transitive binary relations.  We make use of some notation described
in Section~\ref{sec:ds}, so we recommend reading that section first.

\begin{theorem}\label{thm:transitive-relation}
Let $T = (S, \preceq)$ be a transitive binary relation $\preceq$ on a
set of elements $S$, where $|S| = n$.  There is a succinct data
structure for representing $T$ that occupies $n^2/4 + O(( n^2 \lg \lg
n) / \lg n)$ bits, and can support precedence queries in $O(1)$ time:
i.e., given two elements $a, b \in S$, report whether $a \preceq b$.
\end{theorem}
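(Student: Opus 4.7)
The plan is to reduce representing an arbitrary transitive relation to representing a quasi-order, for which Corollary~\ref{cor:digraph} already supplies the desired space bound and $O(1)$ query time. The only respect in which a transitive relation $\preceq$ fails to be a quasi-order lies on the diagonal: some elements $a \in S$ may fail to satisfy $a \preceq a$. I would handle this diagonal information separately from the rest of the relation, so that the ``off-diagonal'' part can be packaged as a bona fide quasi-order.

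Concretely, I would first define the reflexive closure $\preceq^*$ by $a \preceq^* b$ iff $a = b$ or $a \preceq b$. A routine four-case check---on whether each of $a \preceq^* b$ and $b \preceq^* c$ reduces to equality or to a genuine $\preceq$-relationship---shows that $\preceq^*$ is transitive: any case involving an equality collapses immediately, and the remaining case inherits transitivity from $\preceq$. Since $\preceq^*$ is reflexive by construction, $(S, \preceq^*)$ is a quasi-order, so Corollary~\ref{cor:digraph} yields a data structure $\mathcal{D}^*$ of size $n^2/4 + O((n^2 \lg \lg n)/\lg n)$ bits that answers $\preceq^*$-queries in $O(1)$ time (for instance, by applying the corollary to the digraph with edge set $\{(a,b) : a \preceq^* b,\ a \neq b\}$, whose reachability relation is exactly $\preceq^*$). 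Separately, I would store a length-$n$ bit vector $\chi_R$ indicating the reflexive elements $R = \{a \in S : a \preceq a\}$, using $O(n)$ additional bits.

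To answer a precedence query on $a, b \in S$, I would branch on whether $a = b$: if so, return $\chi_R[a]$; if not, return the value $\mathcal{D}^*$ gives for $a \preceq^* b$, which agrees with $a \preceq b$ since $\preceq^*$ and $\preceq$ differ only on the diagonal. This takes $O(1)$ time in both branches, and the total space is $n^2/4 + O((n^2 \lg \lg n)/\lg n) + O(n) = n^2/4 + O((n^2 \lg \lg n)/\lg n)$ bits, matching the claim. I do not anticipate a genuine obstacle: the only delicate point is the four-case verification that the reflexive closure preserves transitivity, which is routine. Intuitively, reflexivity costs only $n$ bits of side information, which is easily absorbed by the lower-order term of the quasi-order representation.
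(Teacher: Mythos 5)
Your proposal is correct and matches the paper's own proof essentially verbatim: the paper likewise stores an $n$-bit vector recording which elements satisfy $a \preceq a$, forms the quasi-order that agrees with $\preceq$ on distinct pairs (i.e., the reflexive closure), represents it via Corollary~\ref{cor:digraph}, and branches on $a = b$ at query time. Your explicit check that the reflexive closure remains transitive is a detail the paper leaves implicit, but the argument is the same.
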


\begin{proof}
Given a transitive binary relation, $T = (S,\preceq)$, we store a bit
sequence $B$, where $B[i] =1$ iff $S(i) \preceq S(i)$.  Thus, by using
$n$ bits, we can report whether $a \preceq a$ in $O(1)$ time, for any
$a \in S$.  At this point, we define a quasiorder $Q = (S,\preceq')$,
where $a \preceq' b$ iff $a \preceq b$, for all \emph{distinct
  elements} $a,b \in S$.  We represent the $Q$ using
Corollary~\ref{cor:digraph}.  Given $a,b \in S$, if $a = b$, and $S(i)
= a$, then we query $B$ and report ``yes'' iff $B[i] = 1$, otherwise,
we query the representation of $Q$ to determine whether $a$ precedes
$b$. \qed
\end{proof}

\section{\label{app:inequality}Proof of inequality used in Lemma~\ref{lem:flatten}}
The inequality is proved by induction on $t$, fixing $s = 1$ (since
the actual value of $s$ is irrelevant).  Base case: $t = 2$ holds
since $(n_1 + n_2)(n_1 + n_2 - 1)/2 \ge n_1 n_2$ for all integers
$n_1, n_2 \ge 1$.  Inductive step: Assume the inequality holds for all
$2 \le t_0 < t$.  We have:
\begin{align*}
  &\enspace \sum_{i = 1}^{t-1} \left( \left( \sum_{j = 1}^{i} n_j \right) n_{i+1} \right) = \sum_{i = 1}^{t-2}\left( \left( \sum_{j = 1}^{i} n_j \right) n_{i+1} \right) + \left(\sum_{j=1}^{t-1} n_j\right)n_t \\
= &\enspace n_{1,t-1}\left(\frac{n_{1,t-1} - 1}{2} + n_t\right) \\
= &\enspace \left(n_{1,t} - n_t\right)\left(\frac{n_{1,t} - 1 + n_t}{2} \right) \\
= &\enspace n_{1,t}\left(\frac{n_{1,t} - 1 + n_t}{2} \right) - n_{t}\left(\frac{n_{1,t} - 1 + n_t}{2} \right) \\
= &\enspace n_{1,t}\left(\frac{n_{1,t} - 1}{2} \right) + \frac{n_{1,t}n_t}{2} - \frac{n_{t}n_{1,t}}{2} + \frac{n_t}{2} - \frac{n_t^2}{2} \\
= &\enspace n_{1,t}\left(\frac{n_{1,t} - 1}{2} \right) - \frac{n_t(n_t - 1)}{2} \\
\le &\enspace n_{1,t}\left(\frac{n_{1,t} - 1}{2} \right) 
\end{align*}
\noindent
Which completes the proof.

\section{\label{app:compress-flat}Extra Details for Lemma~\ref{lem:compress-flat}}

In order to achieve $O((n^2 \lg \lg n)/\lg n)$ bits for the sparse
case, we need to use the standard trick in succinct data structures of
\emph{concatenating} all of the bit sequences output during the merge
steps into one long bit sequence, before applying Lemma~\ref{lem:rrr}
to the sequence.  Note that we can still perform rank operations on an
arbitrary range $[x_1,x_2]$ of this concatenated sequence, by
adjusting our search to take into account the number of 1s in the
prefix $[1,x_1-1]$.  Since this can be computed using a single rank
operation, it does not affect the time required to perform rank
operations.  By storing this concatenated sequence in the data
structure of Lemma~\ref{lem:rrr}, we guarantee that the lower order
term in the space bound will not dominate the space bound.  By the
same analysis presented in Lemma~\ref{lem:flatten}, the length of the
concatenated bit sequence will be $O(n^2)$ bits.  Thus, the size of
the lower order terms will be $O((n^2 \lg \lg n)/\lg n)$ bits.

\section{\label{app:flat-index}Proof of Lemma~\ref{lem:flat-index}}
We augment the output of $\textsc{Compress-Flat}$ with additional data
structures in order to answer queries efficiently.  Let $D_0$ be an
empty set. We denote the first set of elements removed in a dense case
as $D_1$, the second set as $D_2$ and so on.  Let $D_r$ denote the
last set of elements removed in a dense case, for some $r = O(n\lg \lg
n/\lg n)$.  Let $S_\ell = S/(\cup_{i = 0}^{\ell-1}D_i)$, for $1 \le
\ell \le r+1$.  We define $M_\ell(x)$ to be the number of elements $a
\in S_\ell$ such that $S(y) = a$, and $y \le x$. We now discuss how to
compute $M_\ell(x)$ it in $O(1)$ time using a data structure of size
$O(n^2 \lg \lg n/ \lg n)$ bits.  Define $M'_\ell$ to be a bit
sequence, where $M'_\ell[x] = 1$ iff $S(x) \in S_{\ell}$, for $x \in
[1,n]$.  We represent $M'_{\ell}$ using the data structure of
Lemma~\ref{lem:rrr}, for $1 \le \ell \le r + 1$.  Overall, these data
structures occupy $O(n^2 \lg \lg n/ \lg n)$ bits, since $r = O((n \lg
\lg n) / \lg n)$ bits, and each binary string occupies $O(n)$ bits by
Lemma~\ref{lem:rrr}.  To compute $M_\ell(x)$ we return
$\rankop_1(M'_\ell,x)$, which requires $O(1)$ time by
Lemma~\ref{lem:rrr}.

Consider an element $a$ removed during the dense case as part of the
biclique $D_k$.  When we refer to $a$ we will often reference the
antichains $\hat{U}_i$ and $\hat{U}_{i+1}$ such that $D_k \subset
\hat{U}_i \cup \hat{U}_{i+1}$ (see line 6).  Note that the indices $i$
and $i+1$ do \emph{not necessarily} correspond to the indices of
antichains in the initial antichain decomposition, $\tilde{U}$.  We
store an array $C$, where:

\begin{itemize}
\item $C[a].\texttt{id}$ is the value $k$ such that $a \in D_k$, or
  $\infty$ if $a$ was not removed;
\item $C[a].\texttt{rank}$ is the value $x$ such that $D_k(x) = a$;
\item $C[a].\texttt{top}$ is a bit indicating whether $a$ was in
  $U_{i+1}$, when $D_k$ was removed;
\item $C[a].\texttt{pnt}$ is a pointer to the output associated with
  $a$, $W^{-}_a$, $W^{+}_a$, and $Y$;
\item $C[a].\texttt{ds}$ the number of elements with rank less than
  $a$ in $\hat{U}_i \cup \hat{U}_{i+1}$;
\item $C[a].\texttt{dt}$ the number of elements with rank greater than
  $a$ in $\hat{U}_i \cup \hat{U}_{i+1}$.
\end{itemize}

Similar in spirit to Lemma~\ref{lem:flatten-index}, we store an $O(n
\lg n)$ bit array that in $O(1)$ time, for elements $a$ and $b$
returns $i,j,x$ and $y$ such $U_i,U_j \in \tilde{\mathcal{U}}$,
$U_i(x) = a$, and $U_j(y) = b$.  Note that in this case, the indices
\emph{do} correspond to the indices of the antichains in the initial
antichain decomposition $\tilde{\mathcal{U}}$.  We also store an array
$A$ of records, where, for each antichain $U_j \in
\tilde{\mathcal{U}}$, if $U_j$ was the upper antichain in a merge step
during a sparse case:
\begin{itemize}
\item $A[j].\texttt{pnt}$ points to the beginning of the sequence,
  $B$, associated with $U_j$, or null if no sequence is associated
  with $U_j$;
\item $A[j].\texttt{delta}$ stores the value $\ell$ such that the
  merge step occurred after the element set $D_{\ell - 1}$ was removed,
  and before $D_\ell$ was removed.
\end{itemize}

Finally, we store an array of partial sums $F$, where $F[i] = \sum_{k
  = 1}^{i-1} |U_k|$.  All these additional data structures occupy
$O((n^2\lg \lg n )/ \lg n)$ bits, so the claimed space bound holds by
Lemma~\ref{lem:compress-flat}.

\paragraph{Query Algorithm:}
If $i = j$, then we return ''no''. Otherwise, we assume, without loss
of generality, $i > j$. There are several cases:

\begin{enumerate}

\item \label{case:a} If $C[a].\texttt{id} = C[b].\texttt{id}$ and
  $C[a].\texttt{id} \neq \infty$, then:

\begin{enumerate}

\item If $C[a].\texttt{top} \neq C[b].\texttt{top}$, then report
  ``yes'', since there must be an edge between $a$ and $b$ in the
  removed biclique.

\item \label{case:ms}Otherwise, use $A[i].\texttt{pnt}$ to locate the
  bit sequence $B$, let $\ell = A[i].\texttt{delta}$, and $z =
  M_\ell(F[j] + y)$. We report ``yes'' if $B^z[M_\ell(F[i] + x)] = 1$
  and ``no'' otherwise.

\end{enumerate}

\item If $C[a].\texttt{id} = C[b].\texttt{id} = \infty$, then the
  procedure is similar to case~\ref{case:ms}.

\item If $C[a].\texttt{id} < C[b].\texttt{id}$,
  then let $\ell = C[a].\texttt{id}$.

\begin{enumerate} 
\item If $A[i].\texttt{top} = 1$ and $M_\ell(F[i] + x) -
  C[a].\texttt{ds} \le M_\ell(F[j] + y)$, then consider the binary
  string $W^{+}_a$, that we can locate using $C[a].\texttt{pnt}$.  If
  $A[j].\texttt{delta} > \ell$, then bit $W^{+}_a[M_\ell(F[j] + y) -
    M_\ell(F[j])]$ indicates whether there is an edge from $a$ to $b$.
  Otherwise, we check bit $W^{+}_a[M_\ell(F[j] + y)]$.

\item If $A[i].\texttt{top} = 0$ and $M_\ell(F[i] + x) - C[a].ds -1
  = 0$, then the bit we want to examine was output during a merge
  step, and we handle this as in case~\ref{case:ms}.

\item Otherwise, consider the sequence of integers, $Y$, that we can
  locate using $C[a].\texttt{pnt}$.  By examining $Y[M_\ell(F[j] +
    y)]$ and $C[a].\texttt{rank}$ we can determine whether there is an
  edge from $a$ to $b$ in $O(1)$ time\footnote{Briefly, we can use
    word-level parallelism, since $Y[M_\ell(F[j] + y)]$ fits in $O(1)$
    words.}.
\end{enumerate}

\item If $C[b].\texttt{id} < C[a].\texttt{id}$, then let $\ell =
  C[b].\texttt{id}$.
\begin{enumerate}
\item If $A[i].\texttt{delta} \le \ell$, then the bit we want to
  examine was output during a merge step, and we handle this as in
  case~\ref{case:ms}.

\item If $B[i].\texttt{top} = 0$, and $M_\ell(F[j] + y) +
  C[b].\texttt{dt} \ge M_\ell(F[i] + y)$, then consider the binary
  string $W^{-}_b$, that we can locate using $C[b].\texttt{pnt}$.  Let
  $z = M_\ell(F[i]+x) - M_\ell(F[i])$.  The bit $W^{-}_b[z]$ indicates
  whether there is an edge from $a$ to $b$.
\item Otherwise, consider the sequence of integers $Y$, that we can
  locate using $C[b].\texttt{pnt}$.  We examine $Y[M_\ell(F[i] + x) -
    C[b].\texttt{ds} - C[b].\texttt{dt} - 1]$ and $C[b].\texttt{rank}$
  to determine whether $a$ is connected to $b$.  Notice that we must
  correct for the fact that the two consecutive antichains,
  $\hat{U}_i$ and $\hat{U}_{i+1}$, that contain $D_\ell$ are not part
  of the set $H$ on line 9.
\end{enumerate}  
\end{enumerate}
\qed

\end{document}